\documentclass[10pt]{article}
\usepackage[utf8]{inputenc}
\usepackage[T1]{fontenc}
\usepackage[margin=1in]{geometry}
\usepackage{amsmath,amssymb,amsthm,booktabs,float,multirow,graphicx,microtype}
\newtheorem{proposition}{Proposition}
\usepackage[round]{natbib}
\usepackage{authblk}

\title{\vspace{-2em}\bfseries Bias-robust causal inference for panel data}
\author{Angelos Alexopoulos}
\affil{Department of Economics, Athens University of Economics and Business,
Greece\\ \texttt{angelos@aueb.gr}}
\date{}

\begin{document}
\maketitle
\vspace{-3em}

\begin{abstract}
\noindent
We develop a bias-robust causal inference method for observational panel data
settings. Such methods typically impute untreated outcomes, so counterfactual
error passes straight into the estimated treatment effect while conventional
standard errors ignore it. We adapt bias-aware minimax methods, developed for
estimating regression coefficients in factor-model panels, to a causal target:
the average effect on the treated, which has to be imputed and may vary across
units and periods. The estimator corrects the imputed counterfactual with
weighted untreated residuals and reports intervals with an explicit allowance
for the error that remains. In simulations the proposed method holds nominal coverage where
alternatives such as the generalized synthetic control have almost none, especially when the factor rank is underfitted, at the cost of wider intervals. By applying the developed methodology to real data the estimated effect
remains significant for counterfactual errors nearly twice the size that the
design's placebos typically exhibit.
\end{abstract}

\section{Introduction}

Most policy settings do not offer random assignment, so researchers recover
missing counterfactuals from observational data. Synthetic control matches a
treated unit's pre-treatment path with a weighted combination of controls
\citep{abadie2010}; interactive-fixed-effects models instead use unit-specific
loadings on common factors \citep{bai2009}, and generalized synthetic control
(GSC) and matrix completion exploit that low-rank structure across many treated
units and adoption dates \citep{gobillon2016,xu2017,samartsidis2020,athey2021}.

This flexibility has a cost. GSC learns factors from control outcomes and each
treated unit's loading from its pre-treatment history, so counterfactual error
enters the treatment effect directly while conventional standard errors ignore
it. Related concerns motivate synthetic difference-in-differences
\citep{arkhangelsky2021}, augmented synthetic control \citep{benmichael2021},
and generalized-Bayes robustness \citep{alexopoulosdemiris2025};
\citet{rambachan2023} and \citet{callaway2021} address parallel trends rather
than counterfactual error.

The contribution of the present paper is to make causal inference bias-robust, using the
bias-aware minimax methods \citet{armstrong2022} developed for regression coefficients in the case of a complete panel dataset. 
In the case of a causal target we extend this approach in three ways: the
estimand is defined over untreated potential outcomes that are systematically
missing, it averages effects that may differ across treated units, and the
correction is confined to observed untreated units. More precisely, we keep the GSC
estimand but choose weights on untreated residuals minimizing worst-case bias
plus variance, and report intervals that allow explicitly for the bias that
remains. Inference rests on an assumed bound for the counterfactual error.
Coverage follows from that bound under high-level conditions, but the bound
itself cannot be derived in this design; as in honest
difference-in-differences \citep{rambachan2023} we therefore report it rather
than estimate it, and summarize a conclusion by the largest error it survives.

Section 2 develops the estimator, its bias bound and its interval; Section 3
reports the simulations and Section 4 the real data application. Finally,
Section 5 concludes.

\section{Method}

For $T\times N$ matrices let
$\langle B,G\rangle=\operatorname{tr}(B'G)=\sum_{t,i}B_{ti}G_{ti}$ be the
entrywise inner product, which treats the two matrices as long vectors and takes
their dot product. Let $s_1(B)\geq s_2(B)\geq\cdots$ be the singular values of
$B$, which record how much of it lies along each of its orthogonal directions. Three summaries of
those values appear below: the operator norm $\|B\|_{\mathrm{op}}=s_1(B)$, the
Frobenius norm $\|B\|_F=(\sum_j s_j(B)^2)^{1/2}$ and the nuclear norm
$\|B\|_*=\sum_j s_j(B)$.
With $D_{ti}$ the treatment indicator, $\mathcal O=\{D_{ti}=0\}$ holds all
untreated observations, including the pre-adoption periods of units treated
later, and $\mathcal T=\{D_{ti}=1\}$ the rest.
Outcomes obey
\begin{equation}
Y_{ti}=\Gamma_{ti}+\tau_{ti}D_{ti}+\varepsilon_{ti},\qquad
\Gamma_{ti}=x_{ti}'\beta+\alpha_i+\delta_t+L_{ti},\quad
\operatorname{rank}(L)\leq R .
\label{eq:model}
\end{equation}
Here $\Gamma_{ti}=\operatorname{E}\{Y_{ti}(0)\mid
x_{ti},\alpha_i,\delta_t,L_{ti}\}$ and $\tau_{ti}=Y_{ti}(1)-Y_{ti}(0)$; we
assume consistency, no interference and no anticipation, and
$\operatorname{E}(\varepsilon\mid D,x,\alpha,\delta,L)=0$, so treatment may
depend on the latent untreated-outcome structure, and in particular on the
loadings, but not on the idiosyncratic errors.
Effects may differ across units and periods; with $W=D/|\mathcal T|$ the target
$\tau_W=\langle W,\tau\rangle$ averages them over all treated observations, the first
departure from the regression-coefficient problem. We fit \eqref{eq:model} by
rank-$R$ least squares on $\mathcal O$ alone and complete $\widehat\Gamma$
everywhere. For $A$ supported on $\mathcal O$, that is with $A_{\mathcal T}=0$, let
$\widehat\tau(A)=\langle W,Y-\widehat\Gamma\rangle-\langle
A,Y-\widehat\Gamma\rangle$: the first term is the usual imputation estimate, the
average gap between observed and imputed outcomes on treated observations, and
the second subtracts a weighted average of residuals from untreated
observations, chosen
below to remove as much of the counterfactual error as possible. Write
$E=\Gamma-\widehat\Gamma$ for the counterfactual error matrix and $H=W-A$ for
the matrix that carries it into the estimate, so that
\begin{equation}
\widehat\tau(A)-\tau_W=\langle H,E\rangle+\langle H,\varepsilon\rangle .
\label{eq:decomp}
\end{equation}

\subsection{Worst-case bias and minimax weights}

We choose the weights to minimize the mean squared error
$\operatorname{MSE}(A,E)=\operatorname{E}[\{\widehat\tau(A)-\tau_W\}^2]$. This
involves the counterfactual bias, the first term of \eqref{eq:decomp}, which is
not observed because $E$ is unknown, so the criterion cannot be minimized as it
stands. We therefore maximize it over a class $\mathcal E(C)$, where its radius $C$ bounds the counterfactual errors we are willing to assume. 
For errors of that size we take the largest $\operatorname{MSE}(A,E)$ the estimator can have. Importantly, the maximum $\operatorname{MSE}(A,E)$ is free of $E$ and depends on
$A$ alone, so minimizing it is well posed; that minimax problem is what this
subsection develops. More precisely, following \citet{armstrong2022} we take the nuclear class, a ball of radius
$C$ about zero, $\mathcal E(C)=\{E:\|E\|_*\leq C\}$, on which duality gives the exact worst-case
bias
\begin{equation}
\sup_{E\in\mathcal E(C)}|\langle H,E\rangle|=C\,\|H\|_{\mathrm{op}},
\label{eq:duality}
\end{equation}
attained when $E$ is rank one, that is when the whole counterfactual error is a
single time path scaled by a single set of unit loadings. Turning to the second term of \eqref{eq:decomp} we have that $A$ vanishes on $\mathcal T$
and $W$ vanishes on $\mathcal O$, the two have disjoint supports, so
$\|H\|_F^2=\|W\|_F^2+\|A\|_F^2$. Moreover, the errors being uncorrelated with common
variance $\sigma^2$,
\begin{equation}
\operatorname{Var}\langle H,\varepsilon\rangle=\sigma^2\|H\|_F^2
=\sigma^2\bigl(\|W\|_F^2+\|A\|_F^2\bigr),
\label{eq:var}
\end{equation}
of which only $\sigma^2\|A\|_F^2$ depends on $A$. It is clear thus that the correction can reduce the
bias but only by adding variance of its own. Notice also that $A$ is restricted to $\mathcal O$ since only untreated residuals are pure error whereas on the treated observations the residual $Y-\widehat\Gamma$ contains the effect being estimated. If we estimate $\sigma^2$ by the mean
squared residual on $\mathcal O$, $\widehat\sigma^2=|\mathcal O|^{-1}
\sum_{(t,i)\in\mathcal O}(Y_{ti}-\widehat\Gamma_{ti})^2$ and by noting that $\|\cdot\|_{\mathrm{op}}^2$ is convex then,
\begin{equation}
\widehat A=\arg\min_{A:\ A_{\mathcal T}=0}\ \sup_{E\in\mathcal E(C)}
\operatorname{MSE}(A,E)
=\arg\min_{A:\ A_{\mathcal T}=0}
\bigl\{C^2\|W-A\|_{\mathrm{op}}^2+\widehat\sigma^2\|A\|_F^2\bigr\}
\label{eq:minimax}
\end{equation}
is a convex problem. By equation \eqref{eq:duality} the supremum touches only the bias,
giving $C^2\|H\|_{\mathrm{op}}^2$, and by \eqref{eq:var} the variance
contributes $\widehat\sigma^2\|A\|_F^2$ beyond
$\widehat\sigma^2\|W\|_F^2=\widehat\sigma^2/|\mathcal T|$, this is the noise any
estimator of $\tau_W$ carries through the treated observations, fixed by the
design, which shifts the objective without moving its minimizer and is dropped; see in the supplementary material for more details. Moreover, the objective in \eqref{eq:minimax} is homogeneous, so $\widehat A$ depends on $C$ and
$\widehat\sigma$ only through their ratio. At $C=0$ it returns $\widehat A=0$,
the uncorrected imputation; as $C$ grows so does the correction, but only until
weights confined to $\mathcal O$ can flatten the leading direction of $W$ no
further, beyond which the estimate is insensitive to $C$. In a block design
that limit binds at once: if $\mathcal T=P\times Q$ for some set of periods $P$
and units $Q$, then $\widehat A=0$ and $\widehat\tau$ reduces to the imputation
estimator with bias bound $C\|W\|_{\mathrm{op}}$. Since $A$ vanishes on
$\mathcal T$, the restriction of $W-A$ to $P\times Q$ is $W$, and selecting
rows and columns cannot raise an operator norm, so
$\|W-A\|_{\mathrm{op}}\geq\|W\|_{\mathrm{op}}$ throughout, and both terms of
\eqref{eq:minimax} are minimized at $A=0$. The correction draws its power from
staggered adoption, which leaves observations untreated in periods when others
are already treated; one treated unit, or several adopting together, leaves
none.

Let $\widehat H=W-\widehat A$, evaluating \eqref{eq:duality} at $\widehat H$
bounds the counterfactual bias by $b=C\|\widehat H\|_{\mathrm{op}}$, while
\eqref{eq:var} makes the noise standard deviation
$\widehat{\mathrm{se}}=\widehat\sigma\|\widehat H\|_F$. We report
$\widehat\tau(\widehat A)\pm(b+z_{1-\alpha/2}\widehat{\mathrm{se}})$: the
estimate sits off centre by at most $b$ and scatters about that point with
standard deviation $\widehat{\mathrm{se}}$, so allowing for the bias in full
and for the noise at the usual quantile covers whatever the true bias.
A sharper constant is available, because a bias of known size shifts the
sampling distribution rather than widening it \citep{armstrongkolesar2018}; the
saving is negligible here, since $m=b/\widehat{\mathrm{se}}$ is large, so we
report the conservative interval, which is also what \citet{armstrong2022}
implement.
Coverage then rests on two things: that $C$ really does bound the
counterfactual error, and that the noise is approximately normal on the scale
of $\widehat{\mathrm{se}}$. Granting both gives a guarantee at every $C$, which
is what makes the sensitivity analysis of Section 4 meaningful; Section~S4 of
the supplementary material proves the following.

\begin{proposition}[High-level sensitivity coverage]\label{prop:coverage}
Fix $C>0$ and suppose $\Pr(\|E\|_*\leq C)\to1$ and
$\langle\widehat H,\varepsilon\rangle/\widehat{\mathrm{se}}
\overset{d}{\to}N(0,1)$. Then
\[
\liminf\Pr\bigl\{\tau_W\in\widehat\tau(\widehat A)
\pm(b+z_{1-\alpha/2}\widehat{\mathrm{se}})\bigr\}\geq1-\alpha .
\]
\end{proposition}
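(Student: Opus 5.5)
The argument runs through the error decomposition \eqref{eq:decomp} evaluated at the chosen weights. With $\widehat H=W-\widehat A$,
\[
\widehat\tau(\widehat A)-\tau_W=\langle\widehat H,E\rangle+\langle\widehat H,\varepsilon\rangle .
\]
The decomposition is an algebraic identity, so it holds for the data-dependent $\widehat A$ as well. The interval fails to cover exactly when $|\langle\widehat H,E\rangle+\langle\widehat H,\varepsilon\rangle|>b+z_{1-\alpha/2}\widehat{\mathrm{se}}$.

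First I handle the bias term. I intersect with the event $\mathcal A_n=\{\|E\|_*\leq C\}$. The duality bound \eqref{eq:duality} (Hölder's inequality between the nuclear and operator norms) holds pointwise for any fixed matrix. So on $\mathcal A_n$ we have $|\langle\widehat H,E\rangle|\leq\|\widehat H\|_{\mathrm{op}}\|E\|_*\leq C\|\widehat H\|_{\mathrm{op}}=b$, even though $\widehat H$ and $E$ are both random and dependent. The triangle inequality then gives the inclusion
\[
\mathcal A_n\cap\bigl\{|\langle\widehat H,\varepsilon\rangle|\leq z_{1-\alpha/2}\widehat{\mathrm{se}}\bigr\}\subseteq\{\text{coverage}\}.
\]

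Next I bound the probabilities. By the union bound,
\[
\Pr(\text{coverage})\geq\Pr\bigl(|\langle\widehat H,\varepsilon\rangle|/\widehat{\mathrm{se}}\leq z_{1-\alpha/2}\bigr)-\Pr(\mathcal A_n^c).
\]
The second probability tends to zero by hypothesis. For the first, the standardized noise converges in distribution to $N(0,1)$. The set $[-z_{1-\alpha/2},z_{1-\alpha/2}]$ has boundary of Gaussian measure zero, so the portmanteau theorem gives convergence of this probability to $1-\alpha$. Taking the $\liminf$ yields the claim.

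The main obstacle is only bookkeeping. The bias inequality must be applied pointwise on $\mathcal A_n$ rather than by conditioning, because $\widehat H$ depends on $\widehat\sigma$ and hence on $\varepsilon$. The ratio must also be well defined, which requires $\widehat{\mathrm{se}}>0$ with probability tending to one. The convergence hypothesis already presumes this; if needed, I would add the event $\{\widehat{\mathrm{se}}>0\}$ to the intersection, which costs nothing asymptotically. No independence between $E$ and $\varepsilon$ is required, since the argument uses only the union bound.
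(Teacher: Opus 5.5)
Your proposal is correct and follows essentially the same route as the paper's proof: the decomposition at $\widehat H$, the pointwise duality bound $|\langle\widehat H,E\rangle|\leq b$ on $\{\|E\|_*\leq C\}$, a union bound over the two failure events, and the continuity-point argument for the studentized noise. Your remarks on applying the bound pointwise rather than by conditioning, and on $\widehat{\mathrm{se}}>0$, are sound bookkeeping that the paper leaves implicit.
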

This adapts the high-level coverage argument of \citet{armstrong2022}. We
maintain rather than verify its Gaussian-approximation and standard-error
conditions for the implemented factor estimator; Section 3 illustrates the
finite-sample coverage that results under the designs stated there, and does
not establish the conditions generally. The homoskedastic form assumed in
\eqref{eq:var} enters twice: in the criterion, where misspecification costs
only efficiency, and in $\widehat{\mathrm{se}}$, where it bears on coverage.
The proposition is stated in terms of $\widehat{\mathrm{se}}$, so a
dependence-robust replacement leaves it intact.

Finally, under their complete-panel assumptions and for their specified
preliminary estimator, \citet{armstrong2022} bound the relevant nuclear-norm
error by $3R\,s_1(\widehat U)(1+\epsilon)$, $\widehat U$ being the residual
matrix; the multiplier is $3R$ for the error $\Gamma-\widehat\Gamma$ and $2R$
for a differently defined target, the version they implement. Their theorem is
proved for an estimator that refits after an initial debiasing pass, and does
not reach a single rank-$R$ fit on $\mathcal O$ in a panel whose treated block
is never observed untreated. We therefore take
$\widehat C=3R\,s_1(\widehat U)(1+\epsilon)$ as a diagnostic reference rather
than a valid radius for this design; Section 3 reports how it compares with the
realized error there, and $C$ is treated as a sensitivity parameter throughout. What Proposition~\ref{prop:coverage}
leaves to be supplied is $C$ itself, so we report it as a sensitivity parameter
and summarize a conclusion by the largest $C$ it survives.

\section{Simulations}

We simulate $Y_{ti}=1.5f_{1t}\lambda_{1i}+\kappa f_{2t}\lambda_{2i}
+\varepsilon_{ti}$ with $N=T=40$ and no treatment effect, $f_1$ a standardized
AR(1) with coefficient $0.6$, $f_2$ a standardized trend, loadings and errors
standard normal. The $15$ units with the largest $\lambda_{2i}+0.25v_i$ adopt in
five three-unit cohorts at dates $20$--$34$, so selection follows exposure to
the trend and $\kappa$ scales the confounding. Over $200$ replications Monte
Carlo standard errors are at most $0.035$ for coverage and $0.04$ for RMSE.

In Panel A of Table~\ref{tab:sim} all methods fit the true rank. GSC coverage
falls from $0.47$ to $0.03$ as the confounding factor weakens, because its
intervals ignore counterfactual error; the proposed intervals hold coverage
throughout, and the minimax correction lowers bias from $0.77$ to $0.69$ at
$\kappa=1$. Panel B fits one factor when two are present. GSC coverage collapses to
$0.00$--$0.04$ while the proposed intervals retain $1.00$, and here the
diagnostic radius does become informative, since the omitted factor leaves
signal above the noise edge. Honest DiD, which fits no factor model, is
unchanged across panels and is narrower than the proposed interval in both.
In the correctly specified design the diagnostic radius exceeds the realized
$\|E\|_*$ in $99.5$--$100\%$ of replications and averages $1.25$--$1.34$ times it, so intervals
run $23$--$31\%$ wider than under the infeasible oracle radius at unchanged
coverage. The constant matters: $2R$ in place of $3R$ bounds $\|E\|_*$ in at
most $12.5\%$ of replications. Since $b/\widehat{\mathrm{se}}$ averages
$42$--$45$, the intervals are almost entirely bias allowance, and the sharp
critical value of Section 2 would shorten them by under one percent.
Figure~\ref{fig:sim} traces both statistics over the full $\kappa$ range. The
parallel-trends comparators are themselves sensitive to aggregation, which the
Supplementary Material documents.

\begin{table}[H]
\centering\footnotesize
\renewcommand{\arraystretch}{0.86}
\caption{Zero-effect simulations, $N=T=40$, 200 replications. The true ATT is zero, so Bias is the mean estimate. Cov is coverage of the nominal 95\% interval, Len its mean length. Proposed intervals use the oracle radius. Honest DiD returns a robust set, not a point estimate.}
\label{tab:sim}
\begin{tabular}{llcccc}
\toprule
\multicolumn{6}{l}{\textit{Panel A: correct rank ($R=2$)}} \\
Factors & Method & Cov & Len & Bias & RMSE \\
\midrule
\multirow{5}{*}{$\kappa=1$} & Proposed & 1.00 & 6.51 & 0.69 & 0.78 \\
 & GSC & 0.47 & 1.49 & 0.77 & 0.86 \\
 & C\&S DiD & 0.34 & 2.09 & 1.26 & 1.38 \\
 & Honest DiD ($\bar M=0.5$) & 1.00 & 6.23 & -- & -- \\
\cmidrule(l){1-6}
\multirow{5}{*}{$\kappa=0.25$} & Proposed & 1.00 & 6.01 & 0.62 & 0.64 \\
 & GSC & 0.03 & 0.65 & 0.72 & 0.74 \\
 & C\&S DiD & 0.83 & 2.02 & 0.34 & 0.65 \\
 & Honest DiD ($\bar M=0.5$) & 1.00 & 6.32 & -- & -- \\
\midrule
\multicolumn{6}{l}{\textit{Panel B: rank underfitted ($R=1$, truth $R=2$)}} \\
Factors & Method & Cov & Len & Bias & RMSE \\
\midrule
\multirow{5}{*}{$\kappa=1$} & Proposed & 1.00 & 9.82 & 2.38 & 2.42 \\
 & GSC & 0.00 & 1.78 & 2.81 & 2.86 \\
 & C\&S DiD & 0.34 & 2.09 & 1.26 & 1.38 \\
 & Honest DiD ($\bar M=0.5$) & 1.00 & 6.23 & -- & -- \\
\bottomrule
\end{tabular}
\end{table}

\begin{figure}[H]
\centering
\includegraphics[width=0.9\linewidth]{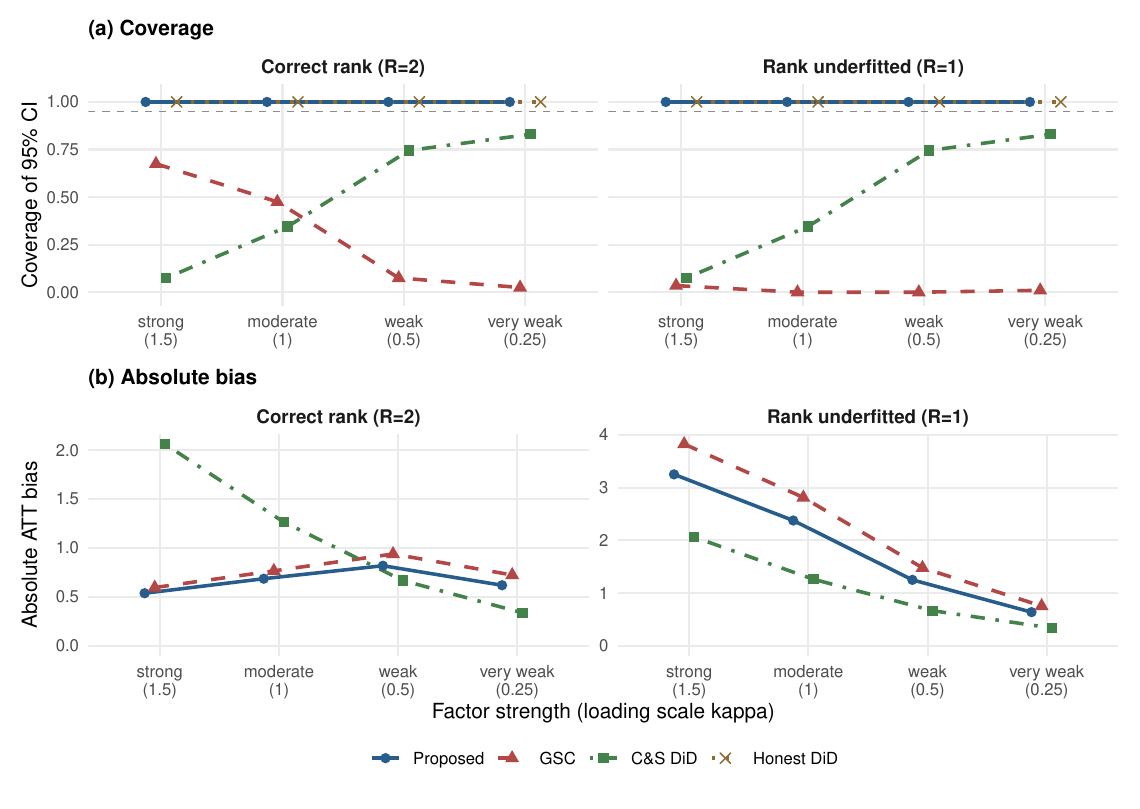}
\caption{Coverage and absolute ATT bias by factor strength; true ATT is zero.
Left column fits the true rank, right fits one factor when two are present.
Dashed line is nominal $95\%$. The two bias panels have separate vertical
scales.}
\label{fig:sim}
\end{figure}

\section{Real Data Analysis}

We revisit the Election Day Registration (EDR) application of \citet{xu2017}:
turnout for $47$ states over $24$ presidential elections, nine adopting in four
cohorts, with mail-in and motor-voter controls, two-way effects and $R=2$. Because the actual effect is unknown, we first measure bias on $50$ confounded donor-state placebos with known zero effect, assigning earlier placebo adoption
to the larger pre-1976 turnout trend. Table~\ref{tab:edr} reports the outcome.
The proposed estimator improves on GSC in bias, MAE and RMSE and covers at
$1.00$; GSC covers at $0.96$ here, so in this design the gain over it lies in
the point estimate rather than in coverage, while Callaway--Sant'Anna is badly
biased and covers $0.28$. The price is intervals three to five times wider.

\begin{table}[H]
\centering\footnotesize
\renewcommand{\arraystretch}{0.9}
\caption{Fifty confounded donor-state placebos with a known zero effect, so Bias is the mean estimate. Cov is coverage of the nominal 95\% interval, Len its mean length. Monte Carlo standard errors are at most 0.33 for bias, 0.24 for RMSE and 0.06 for coverage.}
\label{tab:edr}
\begin{tabular}{lccccc}
\toprule
Method & Bias & MAE & RMSE & Cov & Len \\
\midrule
Proposed & -0.79 & 1.94 & 2.37 & 1.00 & 37.80 \\
GSC & -0.88 & 2.02 & 2.45 & 0.96 & 10.51 \\
C\&S DiD & 4.62 & 4.62 & 4.93 & 0.28 & 7.36 \\
\bottomrule
\end{tabular}
\end{table}

For the actual policy GSC estimates an ATT of $4.90$ points with interval
$[0.29,9.50]$. The correction leaves the estimate at $4.99$ and, as
Figure~\ref{fig:edr} shows, it barely moves across the sweep, from $4.895$ to
$4.989$ as $C$ runs from $0$ to $400$: the weights saturate, so the radius
governs the interval rather than the estimate.
The half-width grows in $C$
while the estimate does not, the correction saturates, as Section~S3 of the
supplementary material shows, so there is a largest bound whose interval
still excludes zero: the breakdown bound, here found to be $37.9$. Whether this
is large depends on the benchmark, and the comparison is best made on the scale
of the bias itself. A placebo effect is zero, so a placebo estimate is
$\langle H,E\rangle+\langle H,\varepsilon\rangle$; netting out the sampling
part leaves a typical counterfactual bias of $2.33$ points against the $4.15$
the interval absorbs at the breakdown, while the $95$th percentile of the
placebo magnitudes reaches $4.23$. The conclusion therefore tolerates about
$1.8$ times the bias this design usually produces but falls just short of the
error it occasionally delivers; the shaded band of Figure~\ref{fig:edr} spans
the two, and the breakdown sits inside its upper edge.
One placebo in fifty is as large as the estimated $4.99$. At the diagnostic
radius $\widehat C=266.0$, deliberately conservative, the conclusion fails.
Honest
DiD is less encouraging, its breakdown $\bar M$ being $0$, since even under
exact parallel trends the robust set $[-0.98,4.63]$ contains zero.

\begin{figure}[H]
\centering
\includegraphics[width=0.9\linewidth]{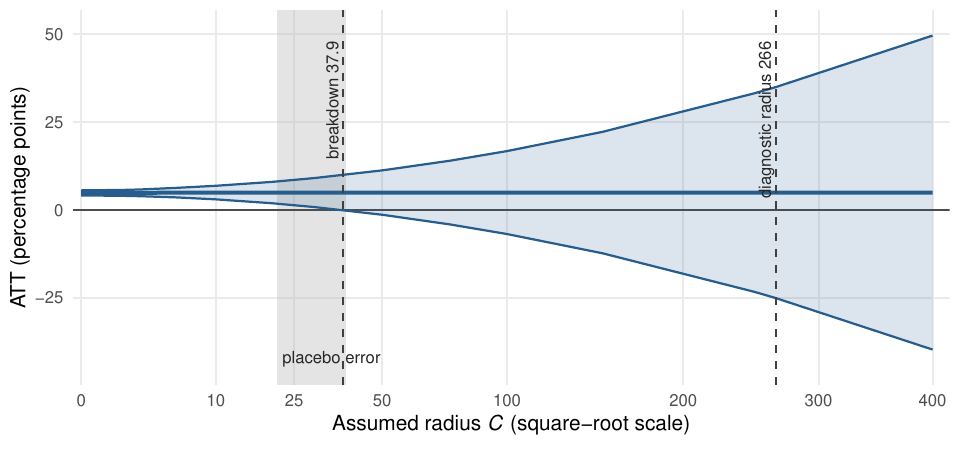}
\caption{The EDR interval against the assumed radius $C$. The shaded band spans
the typical and $95$th-percentile placebo errors, $21.3$ to $38.6$ once divided
by $\|\widehat H\|_{\mathrm{op}}$ to make them commensurate with $C$; the
interval first contains zero at the breakdown bound $37.9$, while
$\widehat C=266.0$ is the diagnostic radius of Section 2.}
\label{fig:edr}
\end{figure}

\section{Conclusion}

We transfer minimax bias-aware inference from panel estimation to causal
inference. The resulting coverage is superior than the one of recently developed methods such as the generalized synthetic control, especially when the factor rank is wrong. The price is a bound
that must be assumed rather than estimated. This is less a concession than a
change of accounting: generalized synthetic control also needs the
counterfactual error to be small and simply never says how small, whereas here
that error is a single number one can vary, benchmark against placebos, and
summarize by the largest value a conclusion survives, the discipline honest
difference-in-differences applies to parallel trends. What remains open is a
bound derived rather than verified. Extending Theorem~2 of
\citet{armstrong2022} to a panel whose treated block is never observed would
close the one step this paper checks by simulation, and a bound proved for this
design should be less conservative than the one we import.

\setlength{\bibsep}{1pt plus 0.3ex}

\clearpage
\setcounter{figure}{0}
\setcounter{table}{0}
\renewcommand{\thefigure}{S\arabic{figure}}
\renewcommand{\thetable}{S\arabic{table}}

\begin{center}
{\large\bfseries Supplementary material}
\end{center}
\medskip

\section*{S1. Setup, notation, and estimand}

Matrices have $T$ rows (periods) and $N$ columns (units), with Frobenius inner
product $\langle B,G\rangle=\operatorname{tr}(B'G)$. The nuclear norm
$\|B\|_*=\sum_j s_j(B)$ sums the singular values, $\|B\|_{\mathrm{op}}=s_1(B)$
is the largest, and $\|B\|_F=\{\langle B,B\rangle\}^{1/2}$.

Let $D_{ti}$ equal one when unit $i$ is treated at date $t$, and set
$\mathcal T=\{(t,i):D_{ti}=1\}$, $\mathcal O=\{(t,i):D_{ti}=0\}$. Under
staggered adoption $\mathcal O$ contains every observation for never-treated
units and every pre-adoption observation for eventually treated units, so the
first stage uses both. Outcomes obey
\[
Y_{ti}=\Gamma_{ti}+\tau_{ti}D_{ti}+\varepsilon_{ti},\qquad
\Gamma_{ti}=x_{ti}'\beta+\alpha_i+\delta_t+L_{ti},\qquad
\operatorname{rank}(L)\leq R ,
\]
where $\Gamma_{ti}=E\{Y_{ti}(0)\mid x_{ti},\alpha_i,\delta_t,L_{ti}\}$ and
$\tau_{ti}=Y_{ti}(1)-Y_{ti}(0)$. Four assumptions are maintained throughout.
Consistency: the observed outcome is $Y_{ti}(D_{ti})$. No interference: unit
$i$'s outcome does not depend on other units' treatments, so $\tau_{ti}$ is
well defined. No anticipation: $Y_{ti}=Y_{ti}(0)$ at every $(t,i)\in\mathcal
O$, which is what makes the pre-adoption periods of eventually treated units
usable in the first stage. And exogeneity of the idiosyncratic error,
$E(\varepsilon\mid D,x,\alpha,\delta,L)=0$. The last is weaker than parallel
trends and is the substantive one: adoption may be driven by the latent
untreated-outcome structure, in particular by the factor loadings, which is the
selection this design is built for, but not by $\varepsilon$.

Two further conditions play distinct roles and should not be conflated. The
minimax criterion of S3 treats the errors as uncorrelated with common variance
$\sigma^2$, which is what reduces
$\operatorname{Var}\langle H,\varepsilon\rangle$ to $\sigma^2\|H\|_F^2$ and
makes the program a second-order cone problem. Inference does not rest on that
form directly: Proposition~1 assumes only that
$\langle\widehat H,\varepsilon\rangle/\widehat{\mathrm{se}}$ is asymptotically
standard normal. If the errors are serially correlated the criterion is merely
solving for the wrong variance and loses efficiency, while
$\widehat\sigma\|\widehat H\|_F$ is the wrong scale and would have to be
replaced by a dependence-robust standard error; the proposition is stated in
terms of $\widehat{\mathrm{se}}$ and goes through with that substitute
unchanged. In the designs reported here the substitution would barely register,
since $b/\widehat{\mathrm{se}}$ of $42$--$45$ leaves the interval almost
entirely bias allowance.

Effects
may vary across units and periods; with $W=D/|\mathcal T|$ the estimand is
\[
\tau_W=\langle W,\tau\rangle=\frac{1}{|\mathcal T|}
\sum_{(t,i)\in\mathcal T}\tau_{ti},
\]
the ATT averaged over treated post-adoption observations. This is the substantive
difference from the regression-coefficient problem for which the minimax
bias-aware apparatus was originally built: the target is defined by imputation
over untreated potential outcomes that are systematically missing, it
aggregates effects that may differ across treated cells, and corrections are
admissible only on observed untreated cells. The object whose error must be
controlled is a counterfactual matrix in both settings.

\section*{S2. First-stage interactive-fixed-effects estimation}

For a chosen rank $R$,
\[
(\widehat\beta,\widehat\alpha,\widehat\delta,\widehat L)
=\arg\min_{\beta,\alpha,\delta,L:\,\operatorname{rank}(L)\leq R}
\sum_{(t,i)\in\mathcal O}
\left(Y_{ti}-x_{ti}'\beta-\alpha_i-\delta_t-L_{ti}\right)^2 .
\]
Normalizations such as $T^{-1}F'F=I_R$ with diagonal $\Lambda'\Lambda$ identify
factors and loadings up to sign without changing the completed $\widehat L$.
The fitted mean
$\widehat\Gamma_{ti}=x_{ti}'\widehat\beta+\widehat\alpha_i+\widehat\delta_t
+\widehat L_{ti}$ is evaluated at every observation, including $\mathcal T$; no
outcome with $D_{ti}=1$ enters. The implementation uses the alternating
least-squares IFE routine in \texttt{gsynth}. Running the conventional
comparator through the same routine ensures differences do not come from
different factor estimates. The same fit supplies $\widehat\sigma^2$, the mean
squared residual on $\mathcal O$ defined in the main text.

\section*{S3. Minimax weights and computation}

Under homoskedastic errors the weights solve
\[
\widehat A=\arg\min_{A:\,A_{\mathcal T}=0}
\left\{C^2\|W-A\|_{\mathrm{op}}^2+\widehat\sigma^2\|A\|_F^2\right\},
\]
the second term being the part of
$\operatorname{Var}\{\langle W-A,\varepsilon\rangle\}
=\sigma^2(\|W\|_F^2+\|A\|_F^2)$ that depends on $A$. The squared operator norm
is convex and the support restriction is linear, so this is a second-order cone
program, solved with \texttt{CVXR} and SCS at tolerance $10^{-5}$. The
objective is rescaled by $\max(C^2,\widehat\sigma^2)$ before solving; without
this SCS returns a spurious ``unbounded'' status once $C/\widehat\sigma$ is
large, which matters because the breakdown calculation sweeps $C$ over several
orders of magnitude. Only the leading singular direction is penalized, and it
can be flattened cheaply, so the correction is active and lowers both bias and
RMSE.

Optional constraints $A\mathbf 1_N=W\mathbf 1_N$ and
$A'\mathbf 1_T=W'\mathbf 1_T$ make the correction balance additive time and
unit components exactly. They force $H\mathbf 1=0$ and $H'\mathbf 1=0$, which
annihilates the additive part of $E$ exactly, and might therefore be expected
to help. They are not imposed in the reported results because they do not:
restricting $A$ that far raises $\|H\|_F$ by more than the tighter bias bound
saves, and over $40$ draws of the design of Section 3 the interval half-width
rises from $1.94$ to $2.93$, $51$ per cent wider.

The main text states that the program minimizes worst-case mean squared error.
That follows in four steps. Write $H=W-A$, $b(A)=C\|H\|_{\mathrm{op}}$ and
$\widehat{\mathrm{se}}(A)=\widehat\sigma\|H\|_F$.

\emph{Step 1.} In the decomposition
$\widehat\tau(A)-\tau_W=\langle H,E\rangle+\langle H,\varepsilon\rangle$ the
first term is constant given the weights and the second has mean zero, so the
cross term vanishes on squaring and taking expectations:
\[
\operatorname{MSE}(A,E)=\langle H,E\rangle^2+\sigma^2\|H\|_F^2 .
\]

\emph{Step 2.} $E$ is unknown and only its membership of $\mathcal E(C)$ is
assumed, so the relevant quantity is the largest this can be over that class.
The variance carries no $E$, so the maximum moves the first term alone, and
duality evaluates it exactly:
\[
\sup_{E\in\mathcal E(C)}\operatorname{MSE}(A,E)
=C^2\|H\|_{\mathrm{op}}^2+\sigma^2\|H\|_F^2
=b(A)^2+\widehat{\mathrm{se}}(A)^2 .
\]
This is where $b$ comes from: it is not a separate construction but what the
maximization returns.

\emph{Step 3.} $W$ and $A$ have disjoint supports, so
$\|H\|_F^2=\|W\|_F^2+\|A\|_F^2$ and the variance splits into a part that
depends on $A$ and one that does not:
\[
\sup_{E\in\mathcal E(C)}\operatorname{MSE}(A,E)
=\underbrace{C^2\|W-A\|_{\mathrm{op}}^2+\widehat\sigma^2\|A\|_F^2}
_{\text{the objective}}
\;+\;\underbrace{\widehat\sigma^2\|W\|_F^2}_{=\ \widehat\sigma^2/|\mathcal T|} .
\]

\emph{Step 4.} The final term is the noise any estimator of $\tau_W$ carries
through the treated observations, fixed by the design and free of $A$. It
shifts the criterion without moving its minimizer, and is dropped. The
objective is therefore the worst-case mean squared error less a constant, and
the two are minimized at the same $\widehat A$ --- so the program does not
merely resemble a mean-squared-error criterion, it is one.

Two further properties of the objective are worth recording. It is homogeneous
of degree two in $(C,\widehat\sigma)$, so $\widehat A$ depends on the two only
through $C/\widehat\sigma$; solved at $(70,1)$ and at $(140,2)$ it returns
bit-identical weights. And the solution saturates in $C$. In the design of
Section 3, where $\|W\|_{\mathrm{op}}=0.0638$, the norm $\|\widehat A\|_F$
rises from zero at $C=0$ to $0.055$ at $C=5$ and $0.064$ at $C=400$, while
$\|\widehat H\|_{\mathrm{op}}$ falls to $0.0542$ by $C=5$ and is constant
thereafter: with $A$ confined to $\mathcal O$ the operator norm cannot be
reduced further. The bias bound $b=C\|\widehat H\|_{\mathrm{op}}$ is therefore
very nearly linear in $C$ beyond a small threshold, which is why the breakdown
radius is well defined and why the point estimate in Section 4 moves so little
across the sweep.

The trade-off the program resolves can then be read off directly. The two
ingredients of Step 2 move against each other: a larger correction lowers $b$
by flattening the leading singular direction of $W$, and raises
$\widehat{\mathrm{se}}$ by importing untreated residuals.
Table~\ref{tab:tradeoff} traces both along the ray $A=t\widehat A$ in the
design of Section 3, at $C=73.9$ and $\sigma=1$.

\begin{table}[H]
\centering\small
\caption{Worst-case bias, noise and their squared sum along $A=t\widehat A$,
five-cohort design, $C=73.9$, $\sigma=1$. The correction solves the program at
$t=1$.}
\label{tab:tradeoff}
\begin{tabular}{lccccccc}
\toprule
$t$ & 0 & 0.25 & 0.50 & 0.75 & 1 & 1.25 & 1.50 \\
\midrule
$\|A\|_F$ & 0.000 & 0.016 & 0.032 & 0.047 & 0.063 & 0.079 & 0.095 \\
$b$ & 4.715 & 4.467 & 4.257 & 4.093 & 4.008 & 4.454 & 5.182 \\
$\widehat{\mathrm{se}}$ & 0.0695 & 0.0713 & 0.0763 & 0.0841 & 0.0939 & 0.1052
& 0.1175 \\
$b^2+\widehat{\mathrm{se}}^2$ & 22.24 & 19.95 & 18.13 & 16.76 & 16.07 & 19.85
& 26.87 \\
objective & 22.23 & 19.95 & 18.12 & 16.75 & 16.06 & 19.84 & 26.86 \\
\bottomrule
\end{tabular}
\end{table}

Too little correction leaves avoidable bias, too much imports noise that
outweighs what it removes, and $t=1$ balances the two. The last two rows differ
by $\widehat\sigma^2\|W\|_F^2=1/207=0.0048$ at every $t$, which is Step 4
confirmed numerically. The name minimax is exact here --- the maximum is over
$E\in\mathcal E(C)$ and produces $b^2$, the minimum is over $A$ and produces
the weights --- and the construction has no tuning parameter beyond $C$, which
is reported as a sensitivity parameter in any case.

Minimizing interval length rather than mean squared error is a natural
alternative, and in general the two criteria need not select the same weights.
Here they very nearly do. Along the same ray the reported half-length
$b+z_{1-\alpha/2}\widehat{\mathrm{se}}$ falls from $4.851$ at $t=0$ to $4.192$
at $t=1$, then rises to $4.269$ at $t=1.1$ and $5.412$ at $t=1.5$; on a grid of
width $0.005$ it is minimized at $t=0.98$, against $t=1$ for the worst-case
mean squared error. The sharp constant
$\widehat{\mathrm{se}}\,cv_\alpha(b/\widehat{\mathrm{se}})$ gives $4.830$,
$4.162$, $4.238$ and $5.375$ at the same four points and is minimized at the
same $t=0.98$, so the choice of critical value does not move the comparison.
The reason both criteria track each other is that
$b/\widehat{\mathrm{se}}$ averages $42$--$45$ in these designs, so the half-length is
dominated by $b$, which is also what dominates the mean squared error; where
the bias bound and the sampling error were comparable the two could part
company. This is a check along one ray through the solution rather than a proof
of global agreement.

\section*{S4. Bias-aware critical value}

With bias bound $b$ and standard error $\widehat{\mathrm{se}}$, the reported
half-length $b+z_{1-\alpha/2}\widehat{\mathrm{se}}$ adds the worst-case bias to
a sampling quantile. A sharper alternative exists, due to Armstrong and
Kolesár (2018): a bias of known magnitude shifts the sampling distribution
rather than widening it, so a half-length
$\widehat{\mathrm{se}}\,cv_\alpha(m)$ suffices, where $m=b/\widehat{\mathrm{se}}$
and $cv_\alpha(m)$ is the $1-\alpha$ quantile of $|N(m,1)|$. Since $|N(m,1)|^2$
is noncentral chi-squared with one degree of freedom and noncentrality $m^2$,
\[
cv_\alpha(m)=\sqrt{\chi^2_{1,\,m^2;\,1-\alpha}} ,
\]
which is exact, satisfies $cv_\alpha(0)=z_{1-\alpha/2}$ and never exceeds
$m+z_{1-\alpha/2}$. It is not used in what follows; the figures below record
what declining it costs.

The two conditions of Proposition 1 are those of Assumption 1 of Armstrong,
Weidner and Zeleneev (2026): the radius bounds the nuclear-norm error with
probability approaching one, and the studentized noise is asymptotically
standard normal. Their Theorems 2--5 establish both for a scalar coefficient in
a complete panel, under their Assumptions 2 and 3, the second required also
with the covariates interchanged. Neither route is open here: the treated block
is structurally unobserved, and the weights are pinned by the design rather
than by a normalization $\langle A,X\rangle_F=1$. S6 therefore reports how the
diagnostic radius compares with the realized error rather than deriving the
first condition.

The second is not reproduced either. Their Theorem 2 presumes the modification
of their Remark 3.3, a bound on the Lindeberg weight
$\max_{t,i}H_{ti}^2/\sum_{t,i}H_{ti}^2$ imposed inside the weight program,
which is what delivers asymptotic normality of $\langle A,U\rangle_F$. That
ratio is not convex, but an elementwise cap $|H_{ti}|\leq\eta$ is, and implies
it. Without a cap the solved weights have Lindeberg weight $0.017$, equivalent
to a sample mean of $59$ observations; at $\eta=0.010$ this falls to $0.013$,
or $79$ observations, while the operator norm rises only from $0.0542$ to
$0.0546$. The reported results impose no cap, so finite-sample coverage is
assessed by simulation instead.

\medskip\noindent\textit{Proof of Proposition 1.} On the event
$\{\|E\|_*\leq C\}$, evaluating the duality identity at $\widehat H$ gives
$|\langle\widehat H,E\rangle|\leq C\|\widehat H\|_{\mathrm{op}}=b$, so the
decomposition $\widehat\tau(\widehat A)-\tau_W=\langle\widehat H,E\rangle
+\langle\widehat H,\varepsilon\rangle$ yields
$|\widehat\tau(\widehat A)-\tau_W|\leq b+|\langle\widehat H,
\varepsilon\rangle|$. An interval of half-width
$b+z_{1-\alpha/2}\widehat{\mathrm{se}}$ can therefore fail to cover only if
$|\langle\widehat H,\varepsilon\rangle|>z_{1-\alpha/2}\widehat{\mathrm{se}}$ or
$\|E\|_*>C$, so
\[
\Pr\{\tau_W\notin\widehat\tau(\widehat A)\pm(b+z_{1-\alpha/2}
\widehat{\mathrm{se}})\}
\leq\Pr\Bigl(\bigl|\langle\widehat H,\varepsilon\rangle\bigr|
/\widehat{\mathrm{se}}>z_{1-\alpha/2}\Bigr)+\Pr(\|E\|_*>C).
\]
The first term tends to $\alpha$ by the assumed studentized limit, since
$z_{1-\alpha/2}$ is a continuity point, and the second vanishes. $\square$

\medskip
Studentizing is what removes the two awkward requirements of a finite-sample
statement. It needs no assumption that $\widehat\sigma$ overstate $\sigma$ ---
the mean squared residual on $\mathcal O$ carries the counterfactual error
upward and the overfitting of a rank-$R$ fit downward, with neither dominating
by construction --- and it needs no claim that the errors stay exactly normal
after conditioning on a variance estimate built from those same errors, which
they do not: conditioning a linear form on a quadratic form in the same
variables perturbs its law.

Neither condition is verified here for the implemented estimator. What the
construction does not require is that $\widehat A$ be chosen independently of
the data, since the weights enter only through $\widehat\sigma$; a sample split
together with an upper confidence bound for $\sigma$ would deliver both
conditions outright, and a primitive same-sample proof would require showing
that the randomness of $\widehat H$ and the leave-in estimation effects are
asymptotically negligible, which we do not attempt. Without either, they are
high-level conditions in the sense of Assumption 1 of Armstrong, Weidner and
Zeleneev (2026), and S6 reports the coverage they deliver: $1.00$ across every
design, with $\widehat\Gamma$, $\widehat\sigma$ and $\widehat A$ all computed
from the same sample as in the application. That they are not binding at these
magnitudes is unsurprising: with $b/\widehat{\mathrm{se}}$ of $42$--$45$ the
interval is almost entirely bias allowance, and doubling $\widehat\sigma$ would
widen it by $2.4\%$. The reported intervals use the conservative half-length
$b+z_{1-\alpha/2}\widehat{\mathrm{se}}$ rather than the sharp
$\widehat{\mathrm{se}}\,cv_\alpha(m)$. Nothing is lost by that choice here. The
saving from the sharp constant is largest ($8$--$11\%$) when $m$ is of order
one and vanishes as $m$ grows, since $cv_\alpha(m)\to m+z_{1-\alpha}$; in the
designs reported here $m$ averages $42$--$45$, so the sharp interval would be
under one percent shorter. On the EDR application the sharp constant would
raise the breakdown bound from $37.9$ to $39.1$ and narrow the interval at
$\widehat C$ by $0.45\%$.
These intervals are almost entirely bias allowance, so the constant that
converts noise into width is not what decides any conclusion.

\section*{S5. The radius, and how conservative it is}

Armstrong, Weidner and Zeleneev (2026), Theorem 2, bound the nuclear-norm error
of the preliminary estimator of their Algorithm 3.1 by
\[
\widehat C = cR\,s_1(\widehat U_{\mathrm{pre}})(1+\epsilon),
\]
with $c=3$ when the target is $\Gamma$ and $c=2$ when it is redefined as
$\Gamma+P_\lambda U$. This procedure sets $E=\Gamma-\widehat\Gamma$, which is
the first case and so requires $c=3$. Using $c=2$, the constant belonging to the
other target, gives a radius that fails to bound $\|E\|_*$ in $89.5$--$99.5$ per
cent of replications; with $c=3$ it bounds it in $99.5$--$100$ per cent.

The theorem is not a statement about rank-$R$ fits in general. It applies to
$\widehat\Gamma_{\mathrm{pre}}$, which in their Algorithm 3.1 is a second
rank-$R$ fit taken after an initial minimax debiasing pass, and it holds under
their Assumptions 2 and 3, the latter required both as stated and with the
covariates interchanged. The first stage here is a single rank-$R$ least
squares on $\mathcal O$ in a panel whose treated block is never observed
untreated. Table~\ref{tab:radius} therefore reports what the bound does in
these designs, not what their theorem guarantees, and the correctly specified
column is the only one in which the premise that the true rank is at most the
fitted $R$ holds at all.

\begin{table}[H]
\centering\small
\caption{Diagnostic radius against the truth, $200$ replications per design.
``Holds'' is the share of replications with $\widehat C\geq\|E\|_*$.}
\label{tab:radius}
\begin{tabular}{lccccc}
\toprule
Factor strength & $\|E\|_*$ & $\widehat C$ at $2R$ & Holds & $\widehat C$ at $3R$ & Holds \\
\midrule
$\kappa=1.5$  & 56.3 & 49.9 & 0.105 & 74.9 & 1.000 \\
$\kappa=1$    & 56.8 & 49.3 & 0.100 & 73.9 & 0.995 \\
$\kappa=0.5$  & 56.5 & 46.8 & 0.005 & 70.1 & 0.995 \\
$\kappa=0.25$ & 52.3 & 44.3 & 0.005 & 66.4 & 1.000 \\
\bottomrule
\end{tabular}
\end{table}

At $c=3$ the radius averages $1.25$--$1.34$ times $\|E\|_*$, so it is a valid
but conservative bound. The cost is width: intervals run $7.5$--$8.4$ against
$6.0$--$6.4$ under the infeasible oracle radius, with coverage $1.00$ in both
cases and point estimates unchanged to three digits.

Two cautions. First, their proof assumes a complete panel and a scalar
coefficient; the treated block here is structurally unobserved, so
Table~\ref{tab:radius} is verification, not derivation. Second, $\widehat C$
scales with the noise: at $T=N=40$, $R=2$ and $\sigma=1$ the term
$s_1(\widehat U)$ is bounded below by roughly $\sigma(\sqrt T+\sqrt N)\approx12$
even when $E=0$, so $\widehat C\approx70$ with no counterfactual error at all.
That is the expected behaviour of an upper bound rather than an estimate --- the
same scale appears in their tuning parameter $b^*=2R(\sqrt N+\sqrt T)$ --- but it
means $\widehat C$ does not adapt to how well the model actually fits.
Subtracting the noise level, which would make it adapt, destroys the bound: the
resulting radius is near zero under correct specification while $\|E\|_*$ is
about $55$, because that error lives almost entirely in the treated block where
untreated data cannot see it. Reporting $C$ as a sensitivity parameter alongside
$\widehat C$, and summarising by the breakdown radius, is the response to both
cautions.

For the record, Armstrong, Weidner and Zeleneev raise a heterogeneous-effect
target in their Remark 2.2 and call a computable bound for it nontrivial,
leaving the question open. The radius used here is their nuclear bound
transplanted and checked, not a solution to that problem: what
Table~\ref{tab:radius} establishes is that the transplant survives in these
designs once the constant is right, which is weaker than a derivation and is a
further reason to vary $C$.

\section*{S6. Aggregation sensitivity of the parallel-trends comparators}

The two parallel-trends benchmarks impose the same identifying assumption but
differ sharply in coverage, and the difference is aggregation rather than
identification. The Callaway--Sant'Anna simple ATT averages every
post-treatment period, including those far from adoption where the linear
confounding trend has accumulated most. Honest DiD at $\bar M=0$ applies the
same assumption to an event-study aggregation over a window of event times
$-4$ to $3$. Table~\ref{tab:aggregation} reports both over the same $200$
draws.

\begin{table}[H]
\centering\small
\caption{Coverage of the two parallel-trends benchmarks on identical draws,
$200$ replications, true ATT zero. Both impose parallel trends; they differ
only in how post-treatment periods are aggregated.}
\label{tab:aggregation}
\begin{tabular}{lcccc}
\toprule
Factor strength & $\kappa=1.5$ & $\kappa=1$ & $\kappa=0.5$ & $\kappa=0.25$ \\
\midrule
C\&S simple ATT & 0.08 & 0.34 & 0.74 & 0.83 \\
Honest DiD, $\bar M=0$ & 0.80 & 0.88 & 0.90 & 0.90 \\
\bottomrule
\end{tabular}
\end{table}

Reports that factor-based methods dominate parallel-trends methods in designs
of this kind should therefore state which aggregation is being compared.

\section*{S7. Algorithm and reported specifications}

\begin{enumerate}
\item Construct $Y$, $D$ and covariates; define $\mathcal O=\{D_{ti}=0\}$.
\item Choose $R$ and estimate $(\beta,\alpha,\delta,L)$ on $\mathcal O$ only.
\item Complete $\widehat\Gamma$ and form untreated residuals and
      $\widehat\sigma$.
\item Form $W=D/|\mathcal T|$ and solve the convex minimax program for
      $\widehat A$ at a chosen radius.
\item Report $\widehat\tau=\langle W,Y-\widehat\Gamma\rangle
      -\langle\widehat A,Y-\widehat\Gamma\rangle$ with its bias allowance and
      the interval of Section~S4.
\item Repeat over $C$ and report the breakdown radius.
\end{enumerate}

The simulations fix $N=T=40$, two-way effects, no covariates, $200$
replications, and five three-unit adoption cohorts at dates $20$--$34$; the
correctly specified design fits $R=2$ and the misspecified design fits $R=1$
against rank-two truth, with all else held fixed. The EDR application uses
two-way effects, mail-in and motor-voter covariates and $R=2$. Conventional GSC
uses the same ranks, covariates, untreated observations and IFE routine.
Callaway--Sant'Anna and Rambachan--Roth honest DiD are estimated separately as
parallel-trends benchmarks; for honest DiD the event-study covariance is
recovered from the influence function returned by the aggregation step, and
event times are indexed by election number rather than calendar year.

\end{document}